\newtheorem{theorem}{Theorem}
\newtheorem{lemma}{Lemma}
\newtheorem{remark}{Remark}
\newtheorem{definition}{Definition}
\definecolor{winered}{rgb}{0.5,0,0}
\title{\LARGE \bf A New Approach for Distributed Hypothesis Testing with Extensions to Byzantine-Resilience}
\author{Aritra Mitra, John A. Richards and Shreyas Sundaram
\thanks{A. Mitra and S. Sundaram are with the School of Electrical and Computer Engineering at Purdue University. J. A.  Richards is with Sandia National Laboratories.   Email: {\tt \{mitra14, sundara2\}@purdue.edu},  {\tt{jaricha@sandia.gov}}. This work was supported in part by NSF CAREER award
1653648, and by a grant from Sandia National Laboratories. Sandia National Laboratories is a multimission laboratory managed and operated by National Technology \& Engineering Solutions of Sandia, LLC, a wholly owned subsidiary of Honeywell International Inc., for the U.S. Department of Energy's National Nuclear Security Administration under contract DE-NA0003525. The views expressed in the article do not necessarily represent the views of the U.S. Department of Energy or the United States Government.}}
\begin{document}
\maketitle
\begin{abstract}
We study a setting where a group of agents, each receiving partially informative private observations, seek to collaboratively learn the true state (among a set of hypotheses) that explains their joint observation profiles over time. To solve this problem, we propose a distributed learning rule that differs fundamentally from existing approaches, in the sense, that it does not employ any form of ``belief-averaging''. Specifically, every agent maintains a local belief (on each hypothesis) that is updated in a Bayesian manner without any network influence, and an actual belief that is updated (up to normalization) as the minimum of its own local belief and the actual beliefs of its neighbors. Under minimal requirements on the signal structures of the agents and the underlying communication graph, we establish consistency of the proposed belief update rule, i.e., we show that the actual beliefs of the agents asymptotically concentrate on the true state almost surely. As one of the key benefits of our approach, we show that our learning rule can be extended to scenarios that capture misbehavior on the part of certain agents in the network, modeled via the Byzantine adversary model. In particular, we prove that each non-adversarial agent can asymptotically learn the true state of the world almost surely, under appropriate conditions on the observation model and the network topology. 
\end{abstract}
\section{Introduction}
Various distributed learning problems arising in social networks (such as opinion formation and spreading), and in engineering systems (such as target recognition by a group of aerial robots) can be studied under the formal framework of distributed hypothesis testing. Within this framework, a group of agents repeatedly observe certain private signals, and aim to infer the ``true state of the world" that explains their joint observations. While much of the earlier work on this topic assumed the existence of a centralized fusion center for performing computational tasks \cite{fusion1,fusion2}, more recent endeavors focus on a distributed setting where interactions among agents are captured by a communication graph \cite{GEBjad,jad2,liu,nedic1,nedic2,lalitha1,su1,shahinTAC,rad,shahinparam}. Our work here falls in the latter class. A typical belief update rule in the distributed setting combines a local Bayesian update with a consensus-based opinion pooling of neighboring beliefs. Specifically, linear opinion pooling is studied in \cite{GEBjad,jad2,liu}, whereas the log-linear form of belief aggregation is studied in the context of distributed hypothesis testing in \cite{nedic1,nedic2,lalitha1,su1,shahinTAC}, and distributed parameter estimation in \cite{rad,shahinparam}. Notably, exponential convergence rates are achieved in \cite{jad2,nedic1,nedic2,lalitha1,su1}, while a finite-time analysis is presented in \cite{shahinTAC}. Extensions to time-varying graphs have also been studied in \cite{liu,nedic1,nedic2}.

In \cite[Section III]{nedic2}, the authors explain that the commonly studied linear and log-linear forms of belief aggregation are specific instances of a more general class of opinion pooling known as g-Quasi-Linear Opinion pools (g-QLOP), introduced in \cite{pools}. The main contribution of our paper is the development of a novel belief update rule that deviates fundamentally from the broad family of g-QLOP learning rules discussed above. Specifically, the learning algorithm that we propose in Section \ref{sec:Algo} does not rely on any linear consensus-based belief aggregation protocol. Instead, each agent maintains two sets of beliefs: a local belief that is updated in a Bayesian manner based on the private observations (without neighbor interactions), and an actual belief that is updated (up to normalization) as the minimum of the agent's own local belief and the actual beliefs of its neighbors. In Section \ref{sec:Proofs}, we establish that under minimal requirements on the agents' signal structures and the communication graph, the actual beliefs of the agents asymptotically concentrate on the true state almost surely. In Section \ref{sec:discuss}, we argue that our approach works under graph-theoretic conditions that are milder than the standard assumption of strong-connectivity.

In addition to the above contribution to the distributed hypothesis testing problem, we also show in this paper that our approach is capable of handling agents that do not follow the prescribed learning algorithm. Indeed, despite the wealth of literature on distributed inference, there is limited understanding of the impact of misbehaving agents for the problem under consideration. Such agents may represent stubborn individuals, ideological extremists in the context of a social network, or model faults (either benign or malicious) in a networked control system. \textit{In the presence of such misbehaving entities, how should the remaining agents process their private observations and the beliefs of their neighbors to eventually learn the truth?}  To answer this question, we model misbehaving agents via the classical Byzantine adversary model, and develop a provably correct, resilient version of our proposed learning rule in Section \ref{sec:LFRHE}. The only related work (that we are aware of) in this regard is reported in \cite{su1}. As we discuss in Section \ref{sec:LFRHE}, our proposed approach is significantly less computationally intensive relative to those in \cite{su1}. We identify conditions on the observation model and the network structure that guarantee applicability of our Byzantine-resilient learning rule, and argue (in Section \ref{sec:discuss}) that such conditions can be checked in polynomial time.
\section{Model and Problem Formulation}
\label{sec:model}
\textbf{Network Model:} We consider a group of  agents $\mathcal{V}=\{1,2,\ldots,n\}$ interacting over a time-invariant, directed communication graph $\mathcal{G}=(\mathcal{V},\mathcal{E})$. An edge $(i,j)\in\mathcal{E}$ indicates that agent $i$ can directly transmit information to agent $j$. If $(i,j)\in\mathcal{E}$, then agent $i$ will be called a neighbor of agent $j$, and agent $j$ will be called an out-neighbor of agent $i$. The set of all neighbors of agent $i$ will be denoted $\mathcal{N}_i$. Given two disjoint sets $\mathcal{C}_1,\mathcal{C}_2 \subseteq{\mathcal{V}}$, we say that $\mathcal{C}_2$ is reachable from $\mathcal{C}_1$ if for every $i\in\mathcal{C}_2$, there exists a directed path from some $j\in\mathcal{C}_1$ to agent $i$ (note that $j$ will in general be a function of $i$). We will use $|\mathcal{C}|$ to denote the cardinality of a set $\mathcal{C}$.

\textbf{Observation Model:} Let $\Theta=\{\theta_1,\theta_2,\ldots,\theta_m\}$ denote $m$ possible states of the world; each $\theta_i\in\Theta$ will be called a hypothesis. Let  $\mathbb{N}$ and $\mathbb{N}_{+}$ denote the set of non-negative integers and positive integers, respectively. Then at each time-step $t\in\mathbb{N}_{+}$, every agent $i\in\mathcal{V}$  privately observes a signal $s_{i,t}\in\mathcal{S}_i$, where $\mathcal{S}_i$ denotes the signal space of agent $i$. The joint observation profile so generated across the network is denoted ${s}_{t}=(s_{1,t},s_{2,t},\ldots,s_{n,t})$, where $s_t\in\mathcal{S}$, and $\mathcal{S}=\mathcal{S}_1\times\mathcal{S}_2\times\ldots \mathcal{S}_n$. 
The signal $s_{t}$ is generated based on a conditional likelihood function $l(\cdot|\theta^{\star})$, governed by the true state of the world $\theta^{\star}\in\Theta$. Let $l_i(\cdot|\theta^{\star}), i\in\mathcal{V}$ denote the $i$-th marginal of $l(\cdot|\theta^{\star})$. 
The signal structure of each agent $i\in\mathcal{V}$ is then characterized by a family of parameterized marginals $\{l_i(w_i|\theta): \theta\in\Theta, w_i\in\mathcal{S}_i\}$.\footnote{Whereas $w_i\in\mathcal{S}_i$ will be used to refer to a generic element of the signal space of agent $i$,  $s_{i,t}$  will denote the random variable (with distribution $l_i(\cdot|\theta^{\star})$) that corresponds to agent $i'$s observation at time-step $t$.}

We make the following standard assumptions \cite{GEBjad,jad2,liu,nedic1,nedic2,lalitha1,su1,shahinTAC}: (i) The signal space of each agent $i$, namely $\mathcal{S}_i$, is finite. (ii) Each agent $i$ has knowledge of its local likelihood functions $\{l_i(\cdot|\theta_p)\}_{p=1}^{m}$, and it holds that $l_i(w_i|\theta) > 0, \forall w_i\in\mathcal{S}_i$, and $\forall \theta \in \Theta$. (iii) The observation sequence of each agent is described by an i.i.d. random process over time; however, at any given time-step, the observations of different agents may potentially be correlated. (iv) There exists a fixed true state of the world $\theta^{\star}\in\Theta$ (unknown to the agents) that generates the observations of all the agents.\footnote{The approach in \cite{nedic1,nedic2} applies to a more general setting where there may not exist such a true hypothesis.} Finally, we define a probability triple $(\Omega,\mathcal{F},\mathbb{P}^{\theta^{\star}})$, where $\Omega\triangleq\{\omega: \omega=(s_1,s_2,\ldots), \forall s_t\in\mathcal{S}, \forall t \in \mathbb{N}_{+}\}$, $\mathcal{F}$ is the $\sigma$-algebra generated by the observation profiles, and $\mathbb{P}^{\theta^{\star}}$ is the probability measure induced by sample paths in $\Omega$. Specifically, $\mathbb{P}^{\theta^{\star}}=\prod \limits_{t=1}^{\infty}l(\cdot|\theta^{\star})$. For the sake of brevity, we will say that an event occurs almost surely to mean that it occurs almost surely w.r.t. the probability measure $\mathbb{P}^{\theta^{\star}}$.

Given the above setup, the goal of each agent in the network is to discern the true state of the world $\theta^{\star}$. The challenge associated with such a task stems from the fact that the private signal structure of any given agent is in general only partially informative. To make this notion precise, define $\Theta^{\theta^{\star}}_i\triangleq\{\theta\in\Theta : l_i(w_i|\theta)=l_i(w_i|\theta^{\star}), \forall w_i\in\mathcal{S}_i\}.$ In words, $\Theta^{\theta^{\star}}_i$ represents the set of hypotheses that are \textit{observationally equivalent} to the true state $\theta^{\star}$ from the perspective of agent $i$. In general, for any agent $i\in\mathcal{V}$,  we may have $|\Theta^{\theta^{\star}}_i| > 1$, necessitating collaboration among agents. While inter-agent collaboration is implicitly assumed in the distributed hypothesis testing literature, in this paper we will also allow for potential misbehavior on the part of certain agents in the network, modeled as follows.

\textbf{Adversary Model:} We assume that a certain fraction of the agents are adversarial, and model their behavior based on the Byzantine fault model \cite{Byz}. In particular, Byzantine agents possess complete knowledge of the observation model, the network model, the algorithms being used, the information being exchanged, and the true state of the world. Leveraging such information, adversarial agents can behave arbitrarily and in a coordinated manner, and can in particular, send incorrect, potentially inconsistent information to their out-neighbors. In terms of their distribution in the network, we will consider an $f$-local adversarial model, i.e., we assume that there are at most $f$ adversaries in the neighborhood of any non-adversarial agent.\footnote{Note that the $f$-local adversarial model assumed here is more general than the $f$-total adversarial model considered in \cite{su1}, where the total number of adversaries in the entire network is upper bounded by $f$.} Finally, we emphasize that the non-adversarial agents are unaware of the identities of the adversaries in their neighborhood. As is fairly standard in the distributed fault-tolerant literature \cite{vaidyacons,rescons,suopt,sundaramopt,mitraarxiv,broad1}, we only assume that non-adversarial agents know the upper bound $f$ on the number of adversaries in their neighborhood. The adversarial set will be denoted by $\mathcal{A}\subset\mathcal{V}$, and the remaining agents $\mathcal{R}=\mathcal{V}\setminus\mathcal{A}$ will be called the regular agents.

Our \textbf{objective} in this paper will be to design a distributed learning rule that allows each regular agent $i\in\mathcal{R}$ to identify the true state of the world almost surely, despite (i) the partially informative signal structures of the agents, and (ii) the actions of any $f$-local Byzantine adversarial set. To this end, we introduce the following notion of \textit{source agents}.
\begin{definition}(\textbf{Source agents}) An agent $i$ is said to be a source agent for a pair of distinct hypotheses $\theta_p,\theta_q\in\Theta$, if $D(l_i(\cdot|\theta_p)||l_i(\cdot|\theta_q)) > 0$, where $D(l_i(\cdot|\theta_p)||l_i(\cdot|\theta_q))$ represents the KL-divergence between the distributions $l_i(\cdot|\theta_p)$ and $l_i(\cdot|\theta_q)$, and is given by:
\begin{equation}
D(l_i(\cdot|\theta_p)||l_i(\cdot|\theta_q))=\sum \limits_{w_i\in\mathcal{S}_i}l_i(w_i|\theta_p)\log\frac{l_i(w_i|\theta_p)}{l_i(w_i|\theta_q)}.
\end{equation}
The set of all source agents for the pair $\theta_p,\theta_q$ is denoted by $\mathcal{S}(\theta_p,\theta_q)$.\footnote{Notice that $\mathcal{S}(\theta_p,\theta_q)=\mathcal{S}(\theta_q,\theta_p)$, since $D(l_i(\cdot|\theta_p)||l_i(\cdot|\theta_q))>0 \iff D(l_i(\cdot|\theta_q)||l_i(\cdot|\theta_p))>0$.}
\end{definition}

In words, a source agent for a pair $\theta_p,\theta_q\in\Theta$ is an agent that can distinguish between the pair of hypotheses $\theta_p,\theta_q$ based on its private signal structure. In our developments, we will require the following two definitions.
\begin{definition}($r$-\textbf{reachable set}) \cite{rescons} For a graph $\mathcal{G}=(\mathcal{V,E})$, a set $\mathcal{C} \subseteq \mathcal{V}$, and an integer $r \in \mathbb{N}_{+}$, $\mathcal{C}$ is an \textit{$r$-reachable set} if there exists an $i \in \mathcal{C}$ such that $|\mathcal{N}_i \setminus \mathcal{C}| \geq r$.
\label{defn:rreachable}
\end{definition}
\begin{definition}(\textbf{strongly} $r$-\textbf{robust graph} \textit{w.r.t.} $\mathcal{S}(\theta_p,\theta_q)$) For $r \in \mathbb{N}_{+}$ and $\theta_p,\theta_q\in\Theta$, a graph $\mathcal{G}=(\mathcal{V,E})$ is \textit{strongly $r$-robust w.r.t. the set of source agents $\mathcal{S}(\theta_p,\theta_q)$}, if for every non-empty subset $\mathcal{C} \subseteq \mathcal{V}\setminus\mathcal{S}(\theta_p,\theta_q)$, $\mathcal{C}$ is $r$-reachable.
\label{defn:strongrobust}
\end{definition}
\section{Proposed Learning Rules}
\subsection{A Novel Belief Update Rule}
\label{sec:Algo}
In this section, we propose a novel belief update rule and discuss the intuition behind it. To introduce the key ideas underlying our basic approach, we first consider a scenario where all agents are regular, i.e., $\mathcal{R}=\mathcal{V}$. Every agent $i$ maintains and updates (at every time-step) two separate sets of belief vectors, namely, $\boldsymbol{\pi}_{i,t}$ and $\boldsymbol{\mu}_{i,t}$. Each of these vectors are probability distributions over the hypothesis set $\Theta$. We will refer to $\boldsymbol{\pi}_{i,t}$ and $\boldsymbol{\mu}_{i,t}$ as the ``local" belief vector (for reasons that will soon become obvious), and the ``actual" belief vector, respectively, maintained by agent $i$. The \textbf{goal} of each agent $i\in\mathcal{V}$ in the network will be to use its own private signals, and the information available from its neighbors, to update $\boldsymbol{\mu}_{i,t}$ sequentially so that $\lim_{t\to\infty}\mu_{i,t}(\theta^{*})=1$ almost surely. To do so, for each $\theta\in\Theta$, and at each time-step $t+1,t\in\mathbb{N}$, agent $i$ first generates $\pi_{i,t+1}(\theta)$ via a local Bayesian update rule that incorporates the private observation $s_{i,t+1}$ using $\pi_{i,t}(\theta)$ as a prior. Having generated $\pi_{i,t+1}(\theta)$, agent $i$ updates $\mu_{i,t+1}(\theta)$ (up to normalization) by setting it to be the minimum of its locally generated belief $\pi_{i,t+1}(\theta)$, and the actual beliefs $\mu_{j,t}(\theta), j\in \mathcal{N}_i$ of its neighbors at the previous time-step. It then reports its actual belief $\mu_{i,t+1}(\theta)$ to each of its out-neighbors.\footnote{Note that based on our algorithm, agents only exchange their actual beliefs, and not their local beliefs.} The belief vectors are initialized as $\mu_{i,0}(\theta)>0,\pi_{i,0}(\theta)>0, \forall \theta\in\Theta, \forall i\in\mathcal{V}$. Subsequently, these vectors are updated at each time-step $t+1$ (where $t\in\mathbb{N}$) as follows:
\begin{itemize}
\item \underline{\textbf{Step 1: Update of the local beliefs}:}
\begin{equation}
\pi_{i,t+1}(\theta)=\frac{l_i(s_{i,t+1}|\theta)\pi_{i,t}(\theta)}{\sum  \limits_{p=1}^{m} l_i(s_{i,t+1}|\theta_p)\pi_{i,t}(\theta_p)}.
\label{eqn:Bayes}
\end{equation}
\item \underline{\textbf{Step 2: Update of the actual beliefs}:}
\begin{equation}
\mu_{i,t+1}(\theta)=\frac{\min\{\{\mu_{j,t}(\theta)\}_{{j\in\mathcal{N}_i}},\pi_{i,t+1}(\theta)\}}{\sum\limits_{p=1}^{m}\min\{\{\mu_{j,t}(\theta_p)\}_{{j\in\mathcal{N}_i}},\pi_{i,t+1}(\theta_p)\}}.
\label{eqn:rule1}
\end{equation}
\end{itemize}

{\bf Intuition behind the learning rule}: Consider the set of source agents $\mathcal{S}(\theta^{*},\theta)$ who can differentiate between a certain false hypothesis $\theta$ and the true state $\theta^{\star}$. Suppose for now that this set is non-empty. We ask: how do the agents in the set $\mathcal{S}(\theta^{\star},\theta)$ contribute to the process of collaborative learning? To answer this question, we note that the signal structures of such agents are rich enough for them to be able to eliminate $\theta$ on their own, i.e., without the support of their neighbors. Thus, the agents in $\mathcal{S}(\theta^{\star},\theta)$ should contribute towards driving the actual beliefs of their out-neighbors (and eventually, of all the agents in the set $\mathcal{V}\setminus \mathcal{S}(\theta^{\star},\theta)$) on the hypothesis $\theta$ to zero. To achieve the above objective, we are especially interested in devising a rule that ensures that the capability of the source agents $\mathcal{S}(\theta^{\star},\theta)$ to eliminate $\theta$ is not diminished due to neighbor interactions. As we shall see later, such a property will be particularly useful when certain agents in the network are adversarial. It is precisely these considerations that motivate us to employ (i) an auxiliary belief vector $\boldsymbol{\pi}_{i,t+1}$ generated via local processing (i.e., without any network influence) of the private signals, and (ii) a min-rule of the form \eqref{eqn:rule1}. Specifically, if $i\in\mathcal{S}(\theta^{\star},\theta)$, then the sequence of local beliefs $\pi_{i,t+1}(\theta)$ will almost surely converge to $0$ based on the update rule \eqref{eqn:Bayes}. Hence, for a source agent $i\in \mathcal{S(\theta^{\star},\theta)}$, $\pi_{i,t+1}(\theta)$ will play the key role of an external network-independent input in the min-rule \eqref{eqn:rule1}. This in turn will trigger a process of belief reduction on the hypothesis $\theta$ originating at the source set $\mathcal{S(\theta^{\star},\theta)}$, and eventually propagating via the proposed min-rule to each agent in the network reachable from such a source set. The above discussion will be made precise in Section 
\ref{sec:Proofs}. 

\begin{remark}
We emphasize that the proposed min-rule \eqref{eqn:rule1} does not employ any form of ``belief-averaging''. This feature is in stark contrast with existing approaches to distributed hypothesis testing that rely either on linear opinion pooling \cite{GEBjad,jad2,liu}, or log-linear opinion pooling\cite{rad,shahinparam,shahinTAC,nedic1,nedic2,lalitha1,su1}. As such, the lack of linearity in our belief update rule precludes (direct or indirect) adaptation of existing analysis techniques to suit our needs. Consequently, we develop a novel sample path based proof technique in Section \ref{sec:Proofs} to establish consistency of the proposed learning rule. As one of the main outcomes of this analysis, we argue that our learning rule works under graph-theoretic conditions that are in general weaker than strong-connectivity (see also Section \ref{sec:discuss}).
\end{remark}
\subsection{A Byzantine-Resilient Belief Update Rule}
\label{sec:LFRHE}
As pointed out in the Introduction, a key benefit of our approach is that it can be extended to account for the worst-case Byzantine adversarial model described in Section \ref{sec:model}. A standard way to analyze the impact of such adversarial agents while designing resilient distributed consensus-based protocols (for applications in consensus \cite{rescons,vaidyacons}, optimization \cite{suopt,sundaramopt}, hypothesis testing \cite{su1},  and  multi-agent rendezvous \cite{seth1}) is to construct an equivalent matrix representation of the linear update rule that involves only the regular agents \cite{vaidyamatrix}. In particular, this requires expressing the iterates of a regular agent as a convex combination of the iterates of its regular neighbors, based on appropriate filtering techniques, and under certain assumptions on the network structure. While this can indeed be achieved efficiently for scalar consensus problems, for problems requiring consensus on vectors (like the belief vectors in our setting), such an approach becomes computationally prohibitive \cite{su1}. To bypass such heavy computations, and yet accommodate Byzantine attacks, we now develop a resilient version of the learning rule introduced in Section \ref{sec:Algo}, as follows. Each agent $i\in\mathcal{R}$ acts as follows at every time-step $t+1$ (where $t\in\mathbb{N}$).
\begin{itemize}
\item \underline{\textbf{Step 1: Update of the local beliefs}:} The local belief $\pi_{i,t+1}(\theta)$ is updated as before, based on \eqref{eqn:Bayes}. 
\vspace{3mm}
\item \underline{\textbf{Step 2: Filtering extreme beliefs}:} If $|\mathcal{N}_i| \geq (2f+1)$, then agent $i$ performs a filtering operation as follows. It collects the actual beliefs $\mu_{j,t}(\theta)$ from each neighbor $j\in\mathcal{N}_i$ and sorts them from highest to lowest. It rejects the highest $f$ and the lowest $f$ of such beliefs (i.e., it throws away $2f$ beliefs in all). In other words, for each hypothesis, a regular agent retains only the moderate beliefs received from its neighbors.
\vspace{3mm} 
\item \underline{\textbf{Step 3: Update of the actual beliefs}:} 
If $|\mathcal{N}_i| \geq (2f+1)$, then agent $i$ updates $\mu_{i,t+1}(\theta)$ as follows. Let the set of neighbors whose beliefs on $\theta$ are not rejected by agent $i$ (based on the previous filtering step) be denoted by $\mathcal{M}^{\theta}_{i,t}\subset\mathcal{N}_i$. The actual belief $\mu_{i,t+1}(\theta)$ is then updated as follows:
\begin{equation}
\mu_{i,t+1}(\theta)=\frac{\min\{\{\mu_{j,t}(\theta)\}_{j\in\mathcal{M}^{\theta}_{i,t}},\pi_{i,t+1}(\theta)\}}{\sum\limits_{p=1}^{m}\min\{\{\mu_{j,t}(\theta_p)\}_{j\in\mathcal{M}^{\theta_p}_{i,t}},\pi_{i,t+1}(\theta_p)\}}.
\label{eqn:rule2}
\end{equation}
If $|\mathcal{N}_i| < (2f+1)$, then agent $i$ updates $\mu_{i,t+1}(\theta)$ as follows:
\begin{equation}
\mu_{i,t+1}(\theta)=\pi_{i,t+1}(\theta).
\label{eqn:rule3}
\end{equation}
\end{itemize}
As with the learning rule presented in Section \ref{sec:Algo}, agent $i$ transmits $\mu_{i,t+1}(\theta)$ to each of its out-neighbors on completion of the above steps. We will refer to the above sequence of actions as the Local-Filtering based Resilient Hypothesis Elimination (LFRHE) algorithm. 
\section{Main Results}
In this section, we state our main results, and then comment on them in Section \ref{sec:discuss}; detailed proofs of the results are presented in Section \ref{sec:Proofs}.
Our first result establishes the correctness of the learning rule proposed in Section \ref{sec:Algo}.
\begin{theorem}
Suppose $\mathcal{R}=\mathcal{V}$, and that the following are true:
\begin{enumerate}
\item[(i)] For every pair of hypotheses $\theta_p,\theta_q\in\Theta$, the corresponding source set $\mathcal{S}(\theta_p,\theta_q)$ is non-empty.
\item[(ii)] For every pair of hypotheses $\theta_p,\theta_q\in\Theta$, $\mathcal{V}\setminus\mathcal{S}(\theta_p,\theta_q)$ is reachable from the source set $\mathcal{S}(\theta_p,\theta_q)$.
\item[(iii)] Every agent $i\in\mathcal{V}$ has a non-zero prior belief on each hypothesis, i.e., $\pi_{i,0}(\theta) > 0,\mu_{i,0}(\theta) > 0$ for all $i\in\mathcal{V}$, and for all $\theta\in\Theta$.
\end{enumerate}
Then, the learning rule described by equations \eqref{eqn:Bayes} and \eqref{eqn:rule1} leads to collaborative learning of the true state, i.e., $\mu_{i,t}(\theta^{\star}) \rightarrow 1$ almost surely $\forall i\in\mathcal{V}$.
\label{thm:rule1}
\end{theorem}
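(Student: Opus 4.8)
The plan is to argue in two stages: first show that the actual beliefs never permanently overshoot --- i.e., for every false hypothesis $\theta\neq\theta^{\star}$ the quantities $\mu_{i,t}(\theta)$ converge to $0$ almost surely --- and then conclude by normalization that $\mu_{i,t}(\theta^{\star})\to 1$. The starting observation is a standard consequence of the strong law of large numbers applied to \eqref{eqn:Bayes}: for any agent $i$ and any pair $\theta_p,\theta_q$, $\frac{1}{t}\log\frac{\pi_{i,t}(\theta_q)}{\pi_{i,t}(\theta_p)}\to -D(l_i(\cdot|\theta_p)\|l_i(\cdot|\theta_q))$ when $\theta_p=\theta^{\star}$; hence if $i\in\mathcal{S}(\theta^{\star},\theta)$ its local belief satisfies $\pi_{i,t}(\theta)\to 0$ almost surely, and moreover $\pi_{i,t}(\theta^{\star})$ stays bounded away from $0$ along the sample path (its liminf is positive). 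I would fix a sample path $\omega$ in the probability-one event on which all these limits hold simultaneously for every agent and every hypothesis pair, and carry out the rest of the argument deterministically on that path.

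The core of the proof is a propagation argument exploiting the min-structure of \eqref{eqn:rule1}. Fix a false hypothesis $\theta$. The key unnormalized quantity to track is $\bar\mu_{i,t}(\theta)$, the numerator in \eqref{eqn:rule1}, together with the normalization factor; a useful first lemma is that the normalizer $\sum_p \min\{\dots\}$ stays bounded below by a positive constant along $\omega$ (this follows because the true-state coordinate never collapses: $\pi_{i,t}(\theta^{\star})$ and, inductively, $\mu_{i,t}(\theta^{\star})$ are bounded away from $0$, which I would prove by induction on $t$ using that the min of finitely many quantities bounded away from zero is bounded away from zero --- here using Assumption (iii) for the base case and reachability to ensure the bound is uniform). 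Granting a uniformly positive normalizer, it suffices to show the numerators $\bar\mu_{i,t}(\theta)\to 0$. For a source agent $i\in\mathcal{S}(\theta^{\star},\theta)$ this is immediate since $\bar\mu_{i,t+1}(\theta)\le \pi_{i,t+1}(\theta)\to 0$. For a non-source agent, I would order $\mathcal{V}\setminus\mathcal{S}(\theta^{\star},\theta)$ by distance (in $\mathcal{G}$) from the source set --- finite and well-defined by Assumption (ii) --- and induct on this distance: an agent at distance $d$ has some neighbor $j$ at distance $d-1$ whose actual belief on $\theta$ tends to $0$ by the inductive hypothesis, and since $\mu_{i,t+1}(\theta)\le \mu_{j,t}(\theta)/(\text{normalizer})$, it too tends to $0$. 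One has to be slightly careful that the min is over $\mathcal{N}_i$ at time $t$ feeding into time $t+1$, so the distance-$d$ conclusion is obtained with a one-step delay relative to distance $d-1$, which is harmless in the limit. Doing this for every false $\theta$ (finitely many) gives $\mu_{i,t}(\theta)\to 0$ for all $\theta\neq\theta^{\star}$ and all $i$, and since $\sum_{p}\mu_{i,t}(\theta_p)=1$ we get $\mu_{i,t}(\theta^{\star})\to 1$.

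The step I expect to be the main obstacle is establishing the \emph{uniform} lower bound on the normalization constants, i.e., ruling out the pathological possibility that $\mu_{i,t}(\theta^{\star})$ (or the corresponding numerator) drifts to $0$ along the sample path even though $\pi_{i,t}(\theta^{\star})$ does not. Because the normalizer and the per-hypothesis numerators are coupled through the min operation and the division, one cannot treat the coordinates independently; the clean way is to track the \emph{unnormalized} beliefs $\tilde\mu_{i,t}(\theta)$ defined by the same recursion without the denominator (so $\mu_{i,t}=\tilde\mu_{i,t}/\sum_p\tilde\mu_{i,t}(\theta_p)$), show directly that $\tilde\mu_{i,t}(\theta^{\star})$ is bounded away from $0$ (a min of $\pi_{i,t}(\theta^{\star})$, bounded below on $\omega$, and neighbors' unnormalized true-state beliefs, handled by induction on $t$ using that ratios $\pi/\pi$ are controlled), and separately that $\tilde\mu_{i,t}(\theta)\to0$ for false $\theta$ by the distance induction above; the ratio then behaves as claimed. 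Verifying that the unnormalized recursion is consistent with the normalized one, and that the "bounded below'' property survives the min with possibly-shrinking neighbor contributions on the true state, is the delicate bookkeeping at the heart of the argument; everything else is routine.
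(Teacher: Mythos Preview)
Your overall architecture matches the paper's proof exactly: restrict to a full-measure set of sample paths on which Lemma~\ref{lemma:Bayes} holds, first establish a uniform positive lower bound on $\mu_{i,t}(\theta^{\star})$, then use that bound to push the false-hypothesis beliefs to zero by inducting on distance from the source set $\mathcal{S}(\theta^{\star},\theta)$. Your Step~2 (the distance propagation) is fine and essentially identical to the paper's.

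Where you diverge is in Step~1, which you correctly flag as the crux but then over-engineer. The ``delicate bookkeeping'' you anticipate---tracking unnormalized beliefs, worrying about whether the bounded-below property survives the min with neighbor contributions---is dissolved by a single observation that you are missing: the normalizer in \eqref{eqn:rule1} is always at most~$1$, because each summand satisfies
\[
\min\{\{\mu_{j,t}(\theta_p)\}_{j\in\mathcal{N}_i},\pi_{i,t+1}(\theta_p)\}\le \pi_{i,t+1}(\theta_p),
\]
and $\sum_p \pi_{i,t+1}(\theta_p)=1$. Consequently
\[
\mu_{i,t+1}(\theta^{\star})\ \ge\ \min\{\{\mu_{j,t}(\theta^{\star})\}_{j\in\mathcal{N}_i},\pi_{i,t+1}(\theta^{\star})\},
\]
and the induction on $t$ closes immediately: once all $\pi_{i,t}(\theta^{\star})$ exceed $\gamma_1-\delta$ (say for $t\ge\bar t_1$), set $\eta=\min\{\gamma_1-\delta,\ \min_i\mu_{i,\bar t_1}(\theta^{\star})\}$ and conclude $\mu_{i,t}(\theta^{\star})\ge\eta$ for all $t\ge\bar t_1$. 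This same $\eta$ is then the lower bound on the normalizer that drives your Step~2. So your proposed detour through an auxiliary unnormalized recursion is unnecessary, and as written it is also circular (the ``unnormalized'' numerator still depends on the \emph{normalized} neighbor beliefs $\mu_{j,t}$, so you cannot decouple the two without the normalizer-$\le 1$ fact anyway).
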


Our second result establishes the correctness of the LFRHE algorithm proposed in Section \ref{sec:LFRHE}.
\begin{theorem}
Suppose the following are true:
\begin{enumerate}
\item[(i)] For every pair of hypotheses $\theta_p,\theta_q\in\Theta$, the graph $\mathcal{G}$ is strongly $(2f+1)$-robust w.r.t. the corresponding source set $\mathcal{S}(\theta_p,\theta_q)$.
\item[(ii)] Each regular agent $i\in\mathcal{R}$ has a non-zero prior belief on each hypothesis, i.e., $\pi_{i,0}(\theta) > 0,\mu_{i,0}(\theta) > 0$ for all $i\in\mathcal{R}$, and for all $\theta\in\Theta$.
\end{enumerate}
Then, the LFRHE algorithm described by equations \eqref{eqn:Bayes}, \eqref{eqn:rule2} and \eqref{eqn:rule3} leads to collaborative learning of the true state despite the actions of any $f$-local set of Byzantine adversaries, i.e., $\mu_{i,t}(\theta^{\star}) \rightarrow 1$ almost surely $\forall i\in\mathcal{R}$.
\label{thm:rule2}
\end{theorem}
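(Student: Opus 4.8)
The plan is to reduce the claim to showing that, for each false hypothesis $\theta\ne\theta^\star$ and each regular agent $i\in\mathcal{R}$, one has $\mu_{i,t}(\theta)\to 0$ almost surely; since the $\boldsymbol{\mu}_{i,t}$ are probability vectors over the finite set $\Theta$, this immediately gives $\mu_{i,t}(\theta^\star)\to 1$. Fix such a $\theta$ and let $\mathcal{S}\triangleq\mathcal{S}(\theta^\star,\theta)$. I would first record two consequences of the strong law of large numbers for the local beliefs: writing $\log\bigl(\pi_{i,t}(\theta_p)/\pi_{i,t}(\theta^\star)\bigr)$ as the initial log-ratio plus a sum of i.i.d. log-likelihood-ratio increments (finite-valued, since the signal spaces are finite and the likelihoods strictly positive, with mean $-D(l_i(\cdot|\theta^\star)\|l_i(\cdot|\theta_p))\le 0$), this quantity drifts to $-\infty$ when $i\in\mathcal{S}(\theta^\star,\theta_p)$ and is constant otherwise. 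Hence, almost surely, $\pi_{i,t}(\theta)\to 0$ for every regular source agent $i\in\mathcal{S}\cap\mathcal{R}$, and, since $\sup_t\pi_{i,t}(\theta_p)/\pi_{i,t}(\theta^\star)<\infty$ for every $p$ and every regular $i$, we also get $\gamma\triangleq\inf_t\min_{i\in\mathcal{R}}\pi_{i,t}(\theta^\star)>0$.

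Next I would establish a uniform-in-$t$ lower bound on the true-state component. Because the normalizing denominator in \eqref{eqn:rule2} is at most $\sum_p\pi_{i,t+1}(\theta_p)=1$, a regular agent with $|\mathcal{N}_i|\ge 2f+1$ satisfies $\mu_{i,t+1}(\theta^\star)\ge\min\{\,\text{(smallest retained neighbor belief on }\theta^\star),\,\pi_{i,t+1}(\theta^\star)\}$, while \eqref{eqn:rule3} gives $\mu_{i,t+1}(\theta^\star)=\pi_{i,t+1}(\theta^\star)$ otherwise. The key point is that the filtering step cannot push the retained minimum below $\min_{j\in\mathcal{N}_i\cap\mathcal{R}}\mu_{j,t}(\theta^\star)$: since at most $f$ of $i$'s neighbors are adversarial, at most $f$ of the received values on $\theta^\star$ can lie below that regular minimum, so the $(f{+}1)$-st smallest received value --- which is exactly the smallest one kept --- is at least it. A one-line induction then yields $\beta_{t+1}\ge\min\{\beta_t,\gamma\}$ for $\beta_t\triangleq\min_{i\in\mathcal{R}}\mu_{i,t}(\theta^\star)$, so with $\rho\triangleq\min\{\beta_0,\gamma\}>0$ we have $\mu_{i,t}(\theta^\star)\ge\rho$ for all $i\in\mathcal{R}$ and all $t\ge 0$. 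Applying the same filtering observation to the component $\theta^\star$ shows the denominator of \eqref{eqn:rule2} is always at least $\rho$, so from here on it suffices to upper-bound numerators.

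The remainder is the elimination/propagation argument. For a regular source agent $i\in\mathcal{S}\cap\mathcal{R}$: either $|\mathcal{N}_i|<2f+1$ and $\mu_{i,t}(\theta)=\pi_{i,t}(\theta)\to 0$ directly, or the numerator of \eqref{eqn:rule2} is at most $\pi_{i,t+1}(\theta)$ while the denominator is at least $\rho$, so again $\mu_{i,t}(\theta)\to 0$. Now set $\mathcal{W}\triangleq\{i\in\mathcal{R}:\mu_{i,t}(\theta)\not\to 0\}$; by the previous sentence $\mathcal{W}\cap\mathcal{S}=\varnothing$, hence $\mathcal{W}\subseteq\mathcal{V}\setminus\mathcal{S}$. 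If $\mathcal{W}\ne\varnothing$, strong $(2f{+}1)$-robustness w.r.t.\ $\mathcal{S}$ supplies an $i^\star\in\mathcal{W}$ with $|\mathcal{N}_{i^\star}\setminus\mathcal{W}|\ge 2f+1$; of these neighbors at most $f$ are adversarial, so $i^\star$ has at least $f+1$ regular neighbors outside $\mathcal{W}$, each with $\mu_{j,t}(\theta)\to 0$. Since $|\mathcal{N}_{i^\star}|\ge 2f+1$, agent $i^\star$ uses \eqref{eqn:rule2}; its smallest retained belief on $\theta$ is the $(f{+}1)$-st smallest of the received ones, and because those $\ge f+1$ ``good'' neighbors all report values bounded by their own maximum $M_t$, at least $f+1$ received values are $\le M_t$, whence the retained minimum is $\le M_t\to 0$. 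Combined with the denominator bound $\rho$, this forces $\mu_{i^\star,t}(\theta)\to 0$, contradicting $i^\star\in\mathcal{W}$. Therefore $\mathcal{W}=\varnothing$, and intersecting these probability-one events over the finitely many false hypotheses completes the proof.

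I expect the propagation step to be the main obstacle, and it is where the Byzantine setting genuinely departs from Theorem~\ref{thm:rule1}: one must argue that discarding the $2f$ extreme neighbor beliefs simultaneously (a) preserves the lower bound on the $\theta^\star$-component, keeping the normalizing denominators bounded away from $0$, and (b) does not insulate a stubborn agent $i^\star$ from the decaying $\theta$-beliefs of enough of its regular neighbors. Both rest on the same elementary counting fact --- among the values supplied by $\mathcal{N}_{i^\star}$, the $(f{+}1)$-st smallest is trapped between the extremes of the regular neighbors' values since at most $f$ entries are corrupted --- but converting it into the quantitative bounds above and threading them through the nonlinear min-and-normalize update is the crux.
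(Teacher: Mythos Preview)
Your argument is correct and, in several respects, cleaner than the paper's. The core ingredients are the same---the ``sandwich'' property of the filtering (each retained neighbor belief is trapped between the extreme regular-neighbor beliefs), a uniform lower bound on the $\theta^\star$-component, and a robustness-driven propagation step---but you organize them differently.

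\textbf{Lower bound.} You obtain $\gamma=\inf_t\min_{i\in\mathcal{R}}\pi_{i,t}(\theta^\star)>0$ directly from the fact that each log-ratio is either constant or a random walk with strictly negative drift (hence finite supremum a.s.), and then run the one-line induction $\beta_{t+1}\ge\min\{\beta_t,\gamma\}$. The paper instead fixes $\delta<\gamma_1$, waits until a time $\bar t_1(\omega,\delta)$ after which all local beliefs on $\theta^\star$ exceed $\gamma_1-\delta$, separately argues that no regular $\mu_{i,t}(\theta^\star)$ can hit zero in finite time, and then sets $\eta(\omega)=\min\{\gamma_1-\delta,\min_i\mu_{i,\bar t_1}(\theta^\star)\}$. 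Your route is shorter and avoids the auxiliary $\delta$ and $\bar t_1$.

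\textbf{Propagation.} The paper peels the non-source agents into layers $\mathcal{L}_r^{(\theta^\star,\theta)}$ (each containing an agent with at least $2f{+}1$ neighbors in earlier layers) and pushes a quantitative bound $\mu_{i,t}(\theta)<\bar\epsilon^{\,q-(r+1)}$ through the layers, with $q=n+1$. You instead define the ``bad'' set $\mathcal{W}=\{i\in\mathcal{R}:\mu_{i,t}(\theta)\not\to 0\}$, observe $\mathcal{W}\subseteq\mathcal{V}\setminus\mathcal{S}$, apply strong $(2f{+}1)$-robustness once to extract an $i^\star\in\mathcal{W}$ with $\ge f{+}1$ regular neighbors outside $\mathcal{W}$, and derive the contradiction $\mu_{i^\star,t}(\theta)\to 0$. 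This is more elementary and sidesteps the $\bar\epsilon^{\,q}$ bookkeeping; the price is that your argument is purely qualitative, whereas the paper's layered bounds are a step toward the exponential-rate conjecture in Section~\ref{sec:discuss}. Note also that the paper's Case~1 (any regular $i$ with $|\mathcal{N}_i|<2f{+}1$ must lie in every source set) is absorbed in your argument: such an $i$ cannot appear in $\mathcal{W}$ since the singleton $\{i\}$ would fail $(2f{+}1)$-reachability, so your contradiction step never needs rule~\eqref{eqn:rule3}.
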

\begin{remark}
For any pair $\theta_p,\theta_q\in\Theta$, notice that condition (i) of Theorem \ref{thm:rule2} (together with the definition of strong-robustness in Def. \ref{defn:strongrobust}) requires $|\mathcal{S}(\theta_p,\theta_q)|\geq(2f+1)$,  if $\mathcal{V}\setminus\mathcal{S}(\theta_p,\theta_q)$ is non-empty.
\label{rem:redundancy}
\end{remark}
 \section{Discussion}
 \label{sec:discuss}
\textbf{(Assumptions in Theorem 1)}:
While the first condition in Theorem \ref{thm:rule1} is a basic global identifiability condition, the second condition on the network structure is in general weaker than the standard assumption of strong-connectivity made in \cite{GEBjad,jad2,rad,shahinparam,shahinTAC,lalitha1}. To see why the latter statement is true, consider a scenario where $\Theta=\{\theta_1,\theta_2\}$. Clearly, any agent $i\in\mathcal{S}(\theta_1,\theta_2)$ can discern the true state without neighbor interactions, precluding the need for incoming edges to such agents.\footnote{For the problem under consideration, the argument that the strong connectivity assumption can be relaxed applies to more general scenarios as well, where there does not necessarily exist any one agent that can identify the true state based on just its private signal structure. The underlying reason for this stems from information heterogeneity and information redundancy among agents \cite{mitraTAC}, features shared by distributed estimation and detection type problems, but lacking in a standard consensus setting. } Finally, the assumption of non-zero initial beliefs is fairly standard, and can be easily met by maintaining a uniform support over the hypotheses set initially.

\textbf{(Assumptions in Theorem 2)}: The first condition in Theorem \ref{thm:rule2} blends requirements on the signal structures of the agents with those on the communication graph. To gain intuition about this condition, suppose $\Theta=\{\theta_1,\theta_2\}$, and let there exist at least one agent $i\in\mathcal{V}\setminus\mathcal{S}(\theta_1,\theta_2)$. To enable agent $i$ to discern the truth despite potential adversaries in its neighborhood, one requires (i) redundancy in the signal structures of the agents (see Remark \ref{rem:redundancy}), and (ii) redundancy in the network structure to facilitate reliable information flow from $\mathcal{S}(\theta_1,\theta_2)$ to agent $i$. These requirements are captured by condition (i), a point made apparent in Section \ref{sec:proofthm2}.

\textbf{(Complexity of Checking Condition (i) in Theorem \ref{thm:rule2})}: Given a network of agents with associated signal structures, condition (i) in Theorem \ref{thm:rule2} can be checked in polynomial time. Specifically, for every pair $\theta_p,\theta_q\in\Theta$, finding the source set $\mathcal{S}(\theta_p,\theta_q)$ can easily be done in polynomial time via inspection of the agents' signal structures. For a fixed source set $\mathcal{S}(\theta_p,\theta_q)$, checking whether $\mathcal{G}$ is strongly $(2f+1)$-robust w.r.t. $\mathcal{S}(\theta_p,\theta_q)$ amounts to simulating a bootstrap percolation process on $\mathcal{G}$, with $\mathcal{S}(\theta_p,\theta_q)$ as the initial active set, and $(2f+1)$ as the threshold. This too can be achieved in polynomial time, as discussed in \cite{mitraarxiv}.

\textbf{(Analogy with Distributed State Estimation)}:
Consider the problem of collaboratively estimating the state of an LTI process based on information exchanges among agents that receive partial measurements of the state. There are natural connections between this setting, and the problem studied in this paper. For the state estimation scenario, one can fix an unstable mode of the process, and define source agents for that mode to be agents that can detect the eigenspaces associated with that mode. Interestingly, with source agents defined for each unstable mode in the manner described above, \cite[Theorem 3]{mitraTAC}  and \cite[Theorem 7]{mitraarxiv} (in the context of distributed state estimation) can be viewed as analogues of Theorem \ref{thm:rule1} and Theorem \ref{thm:rule2}, respectively. 

\textbf{(Convergence Rate)}: Consider any false hypothesis $\theta \neq \theta^{\star}$. We conjecture that based on our learning rules, the actual beliefs of all the regular agents on $\theta$ will almost surely decay exponentially fast after a transient period, with the rate of decay lower bounded by $\min_{i \in \mathcal{S}(\theta^{\star},\theta)\cap\mathcal{R}}D(l_i(\cdot|\theta^{\star})||l_i(\cdot|\theta))$.
\section{Proofs of the Main Results}
\label{sec:Proofs}
We start with the following simple lemma that characterizes the asymptotic behavior of the local belief sequences generated based on \eqref{eqn:Bayes}; we provide a proof (adapted to our notation) to keep the paper self-contained.
\begin{lemma}
Consider an agent $i\in\mathcal{S}(\theta^{\star},\theta)\cap\mathcal{R}$. Suppose $\pi_{i,0}(\theta^{\star}) > 0$. Then, the update rule \eqref{eqn:Bayes} ensures that (i) $\pi_{i,t}(\theta) \rightarrow 0$ almost surely, and (ii) $\pi_{i,\infty}(\theta^{\star})\triangleq\lim_{t\to\infty}\pi_{i,t}(\theta^{\star})$ exists almost surely, and satisfies $\pi_{i,\infty}(\theta^{\star})\geq \pi_{i,0}(\theta^{\star})$.
\label{lemma:Bayes}
\end{lemma}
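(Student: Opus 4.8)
The plan is to analyze the Bayesian update \eqref{eqn:Bayes} via the standard log-likelihood-ratio trick. Fix the agent $i\in\mathcal{S}(\theta^\star,\theta)\cap\mathcal{R}$ and consider the ratio $\rho_t \triangleq \pi_{i,t}(\theta)/\pi_{i,t}(\theta^\star)$; since both priors are strictly positive and all likelihoods are strictly positive (assumption (ii) on the observation model), $\rho_t$ is well-defined and positive for all $t$. Dividing the instance of \eqref{eqn:Bayes} for $\theta$ by the one for $\theta^\star$, the normalizing denominators cancel, giving the clean recursion $\rho_{t+1} = \rho_t \cdot \dfrac{l_i(s_{i,t+1}\mid\theta)}{l_i(s_{i,t+1}\mid\theta^\star)}$, hence $\log\rho_t = \log\rho_0 + \sum_{\tau=1}^{t}\log\dfrac{l_i(s_{i,\tau}\mid\theta)}{l_i(s_{i,\tau}\mid\theta^\star)}$.

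Next I would invoke the strong law of large numbers. Under $\mathbb{P}^{\theta^\star}$ the observations $\{s_{i,\tau}\}$ are i.i.d. (assumption (iii)) with marginal $l_i(\cdot\mid\theta^\star)$, and the increments $\log\frac{l_i(s_{i,\tau}\mid\theta)}{l_i(s_{i,\tau}\mid\theta^\star)}$ are i.i.d. bounded random variables (bounded because $\mathcal{S}_i$ is finite and all likelihoods are strictly positive), with common expectation $\mathbb{E}_{\theta^\star}\!\left[\log\frac{l_i(s_{i,1}\mid\theta)}{l_i(s_{i,1}\mid\theta^\star)}\right] = -D(l_i(\cdot\mid\theta^\star)\,\|\,l_i(\cdot\mid\theta))$. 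Since $i$ is a source agent for the pair $(\theta^\star,\theta)$, this KL-divergence is strictly positive, so the mean increment is a strictly negative constant $-c$ with $c>0$. By the SLLN, $\frac1t\log\rho_t \to -c$ almost surely, hence $\log\rho_t \to -\infty$ and therefore $\rho_t \to 0$ almost surely. Because $\pi_{i,t}(\theta) = \rho_t/\big(\rho_t + \text{(sum of other ratios)}\big) \le \rho_t$ — more simply, $0 \le \pi_{i,t}(\theta) \le \rho_t$ since $\pi_{i,t}(\theta^\star)\le 1$ — we conclude $\pi_{i,t}(\theta)\to 0$ almost surely, which is part (i).

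For part (ii), I would work on the almost-sure event where $\pi_{i,t}(\theta_p)\to 0$ for every $\theta_p$ that is not observationally equivalent to $\theta^\star$ for agent $i$ (a finite intersection of the events from part (i), applied to each such $\theta_p$; for $\theta_p\in\Theta_i^{\theta^\star}$ the likelihoods coincide with those of $\theta^\star$ and that belief component does not decay, but there are finitely many hypotheses so this is harmless). On this event, $\pi_{i,t}(\theta^\star) = 1 - \sum_{p:\theta_p\neq\theta^\star}\pi_{i,t}(\theta_p)$, and I would argue that $\lim_{t\to\infty}\pi_{i,t}(\theta^\star)$ exists by showing the complementary sum converges. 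The cleanest route: for each fixed pair $\theta^\star,\theta_p$ with $\theta_p\notin\Theta_i^{\theta^\star}$ the corresponding ratio $\to 0$, and for $\theta_p\in\Theta_i^{\theta^\star}$ the ratio $\pi_{i,t}(\theta_p)/\pi_{i,t}(\theta^\star)$ is \emph{constant} in $t$ (the likelihood ratios are identically $1$), equal to its initial value $\pi_{i,0}(\theta_p)/\pi_{i,0}(\theta^\star)$; dividing the normalization identity $\sum_p \pi_{i,t}(\theta_p)=1$ through by $\pi_{i,t}(\theta^\star)$ shows $1/\pi_{i,t}(\theta^\star) = \sum_p \big(\pi_{i,t}(\theta_p)/\pi_{i,t}(\theta^\star)\big)$ converges to the finite limit $\sum_{\theta_p\in\Theta_i^{\theta^\star}}\pi_{i,0}(\theta_p)/\pi_{i,0}(\theta^\star) \le 1/\pi_{i,0}(\theta^\star)$, whence $\pi_{i,t}(\theta^\star)$ converges to a limit $\pi_{i,\infty}(\theta^\star)\ge\pi_{i,0}(\theta^\star)$. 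I do not anticipate a serious obstacle here; the only point requiring care is the bookkeeping with observationally-equivalent hypotheses when $|\Theta_i^{\theta^\star}|>1$, and the boundedness argument needed to invoke the SLLN (rather than a more delicate martingale convergence argument), which is immediate from the finiteness of $\mathcal{S}_i$ and strict positivity of the likelihoods.
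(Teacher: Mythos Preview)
Your proposal is correct and follows essentially the same approach as the paper's proof: both take the log-likelihood ratio $\log\frac{\pi_{i,t}(\theta)}{\pi_{i,t}(\theta^\star)}$, obtain the additive recursion, apply the SLLN to the i.i.d.\ increments with mean $-D(l_i(\cdot\mid\theta^\star)\|l_i(\cdot\mid\theta))<0$, and for part~(ii) use that the ratios $\pi_{i,t}(\theta_p)/\pi_{i,t}(\theta^\star)$ are constant for $\theta_p\in\Theta_i^{\theta^\star}$ and vanish otherwise. Your treatment of part~(ii) is in fact slightly more explicit than the paper's (you spell out the identity $1/\pi_{i,t}(\theta^\star)=\sum_p \pi_{i,t}(\theta_p)/\pi_{i,t}(\theta^\star)$ and bound its limit), but the underlying argument is identical.
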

\begin{proof}
Pick an agent $i\in\mathcal{S}(\theta^{\star},\theta)\cap\mathcal{R}$, and define:
\begin{equation}
\rho^{}_{i,t}(\theta)\triangleq \log\frac{\pi_{i,t}(\theta)}{\pi_{i,t}(\theta^{\star})}, \hspace{2mm} \lambda^{}_{i,t}(\theta) \triangleq \log\frac{l_i(s_{i,t}|\theta)}{l_i(s_{i,t}|\theta^{\star})}.
\end{equation}
Then, based on \eqref{eqn:Bayes}, we obtain the following recursion:
\begin{equation}
\rho_{i,t+1}(\theta)=\rho_{i,t}(\theta)+\lambda_{i,t+1}(\theta), \forall t\in\mathbb{N}.
\label{eqn:recur}
\end{equation}
Rolling out the above equation over time yields:
\begin{equation}
\rho_{i,t}(\theta)=\rho_{i,0}(\theta)+\sum \limits_{k=1}^{t}\lambda_{i,k}(\theta), \forall t\in\mathbb{N}_{+}.
\label{eqn:recursion}
\end{equation}
Notice that $\{\lambda_{i,t}(\theta)\}$ is a sequence of i.i.d. random variables with finite means and variances. In particular, it is easy to verify that each random variable $\lambda_{i,t}(\theta)$ has mean\footnote{More precisely, the mean here is obtained by using the expectation operator $\mathbb{E}^{\theta^{\star}}[\cdot]$ associated with the measure $\mathbb{P}^{\theta^{\star}}$.} given by $-D(l_i(\cdot|\theta^{\star})||l_i(\cdot|\theta))$. Thus, based on the strong law of large numbers, we have $\frac{1}{t}\sum \limits_{k=1}^{t}\lambda_{i,k}(\theta) \rightarrow -D(l_i(\cdot|\theta^{\star})||l_i(\cdot|\theta))$ almost surely. Dividing both sides of \eqref{eqn:recursion} by $t$, and taking the limit as $t$ goes to infinity, we then obtain:
\begin{equation}
\lim_{t\to\infty}\frac{1}{t}\rho_{i,t}(\theta)=-D(l_i(\cdot|\theta^{\star})||l_i(\cdot|\theta)) \hspace{1mm} \textrm{almost surely}.
\label{eqn:limit}
\end{equation}
Finally, note that based on the definition of the set $\mathcal{S}(\theta^{\star},\theta)$, $D(l_i(\cdot|\theta^{\star})||l_i(\cdot|\theta)) > 0$. It then follows from \eqref{eqn:limit} that $\rho_{i,t}(\theta) \rightarrow -\infty$ almost surely, and hence $\pi_{i,t}(\theta) \rightarrow 0$ almost surely. For any $\theta\in\Theta^{\theta^{\star}}_i$, observe that $\lambda_{i,t}(\theta)=0, \forall t\in\mathbb{N}_{+}$. It then follows from \eqref{eqn:recur} that for each $\theta\in\Theta^{\theta^{\star}}_i$, $\rho_{i,t}(\theta)=\rho_{i,0}(\theta), \forall t\in\mathbb{N}_{+}$. From the above discussion, we conclude that a limiting belief vector $\boldsymbol{\pi}_{i,\infty}$ exists almost surely, with non-zero entries corresponding to only those $\theta\in\Theta^{\theta^{\star}}_i$ for which $\pi_{i,0}(\theta) > 0$. Part (ii) of the lemma then follows readily by noting that $\pi_{i,0}(\theta^{\star}) > 0$.
\end{proof}
We are now in position to prove Theorems \ref{thm:rule1} and \ref{thm:rule2}.
\subsection{Proof of Theorem \ref{thm:rule1}}
\begin{proof} Let $\bar{\Omega}\subseteq\Omega$ denote the set of sample paths along which for each agent $i\in\mathcal{V}$, the following hold: (i) for each $\theta\in\Theta\setminus{\Theta^{\theta^{\star}}_i}$,  $\pi_{i,t}(\theta) \rightarrow 0$, and (ii) $\pi_{i,\infty}(\theta^{\star})\triangleq\lim_{t\to\infty}\pi_{i,t}(\theta^{\star})$ exists, and satisfies $\pi_{i,\infty}(\theta^{\star})\geq \pi_{i,0}(\theta^{\star})$. Recall that $\Theta^{\theta^{\star}}_i$ represents the set of hypotheses that are observationally equivalent to the true state $\theta^{\star}$ from the point of view of agent $i$. Hence, for each $\theta\in\Theta\setminus{\Theta^{\theta^{\star}}_i}$, we have $i \in \mathcal{S}(\theta^{\star},\theta)$. Based on the third condition in the statement of Theorem \ref{thm:rule1}, and Lemma \ref{lemma:Bayes}, we infer that $\bar{\Omega}$ has measure $1$. Thus, to prove the desired result, it suffices to confine our attention to the set $\bar{\Omega}$. Specifically, fix any sample path $\omega\in\bar{\Omega}$, and pick any $\epsilon > 0$. Our goal will be to establish that along the sample path $\omega$, there exists $t(\omega,\epsilon)$ such that for all $t\geq t(\omega,\epsilon)$, $\mu_{i,t}(\theta) < \epsilon$ for all $i\in\mathcal{V}$, and for all $\theta \neq \theta^{\star}$ in the dynamics given by \eqref{eqn:rule1}. This would be equivalent to establishing that the actual beliefs of all the agents on the true state can be made arbitrarily close to $1$ (since the proposed min-rule \eqref{eqn:rule1} generates a valid probability distribution over the hypothesis set at each time-step). We complete the proof in the following two steps.

\textbf{\underline{Step 1}:} \textit{Lower bounding the actual beliefs on the true state}: Consider the following scenario. During a transient phase, certain agents see private signals that cause them to temporarily lower their local beliefs on the true state. This in turn gets propagated via the min-rule \eqref{eqn:rule1} to the actual beliefs of the agents in the network. For sample paths in the set $\bar{\Omega}$, we rule out the possibility of such a transient phenomenon triggering a cascade of progressively lower beliefs on the true state. To this end, define $\gamma_1\triangleq\min_{i\in\mathcal{V}} \pi_{i,0}(\theta^{\star})$. Notice that $\gamma_1 > 0$ based on condition (iii) of the theorem. Given the choice of the sample path $\omega$, we notice that $\pi_{i,\infty}(\theta^{\star})$ exists for each $i\in\mathcal{V}$, and that $\pi_{i,\infty}(\theta^{\star}) \geq \gamma_1$. Pick a small number $\delta > 0$ such that $\delta < \gamma_1$. The following statement is then immediate. For each agent $i\in\mathcal{V}$, there exists $t_i(\omega,\delta)$, such that for all $t \geq t_i(\omega,\delta)$, $\pi_{i,t}(\theta^{\star}) \geq  \gamma_1-\delta > 0$. Define $\bar{t}_1(\omega,\delta)\triangleq\max_{i\in\mathcal{V}}t_i(\omega,\delta)$. In words, $\bar{t}_1(\omega,\delta)$ represents the time-step beyond which the local beliefs of all the agents on the true state are lower-bounded by $\gamma_1-\delta$. We ask: At such a time-step, what is the lowest actual belief held by an agent on the true state? More precisely, we define $\gamma_2(\omega)\triangleq\min_{i\in\mathcal{V}}\{\mu_{i,\bar{t}_1(\omega,\delta)}(\theta^{\star})\}$. We claim $\gamma_2(\omega) > 0$. To see this, observe that given the assumption of non-zero prior beliefs on the true state, and the structure of the proposed min-rule \eqref{eqn:rule1}, $\gamma_2(\omega)$ can be $0$ if and only if there exists some time-step $t^{'}(\omega) \leq \bar{t}_1(\omega,\delta)$ such that $\pi_{i,t^{'}(\omega)}(\theta^{\star})=0$, for some $i\in\mathcal{V}$. However, given the structure of the local Bayesian update rule \eqref{eqn:Bayes}, we would then have $\pi_{i,t}(\theta^{\star})=0$, for all $t\geq{t}^{'}(\omega)$, contradicting the fact that $\pi_{i,t}(\theta^{\star}) \geq \gamma_1-\delta > 0, \forall t \geq \bar{t}_1(\omega,\delta) \geq t^{'}(\omega), \forall i\in\mathcal{V}$ (the latter fact has already been established above). Having thus  established that $\gamma_2(\omega) > 0$, define $\eta(\omega)\triangleq\min\{\gamma_1-\delta,\gamma_2(\omega)\} > 0$. In other words, $\eta(\omega)$ lower-bounds the lowest belief (considering both local and actual beliefs) on the true state $\theta^{\star}$ held by an agent at time-step $\bar{t}_1(\omega,\delta)$. We claim the following:
\begin{equation}
\mu_{i,t}(\theta^{\star}) \geq \eta(\omega), \forall t \geq \bar{t}_1(\omega,\delta), \forall i\in\mathcal{V}.
\label{eqn:lower_bound}
\end{equation}
To see why \eqref{eqn:lower_bound} is true, fix an agent $i\in\mathcal{V}$, and consider the following chain of inequalities:
\begin{equation}
\begin{aligned}
\mu_{i,\bar{t}_1(\omega,\delta)+1}(\theta^{\star})&\overset{(a)}{=}\frac{\min\{\{\mu_{j,\bar{t}_1(\omega,\delta)}(\theta^{\star})\}_{{j\in\mathcal{N}_i}},\pi_{i,\bar{t}_1(\omega,\delta)+1}(\theta^{\star})\}}{\sum\limits_{p=1}^{m}\min\{\{\mu_{j,\bar{t}_1(\omega,\delta)}(\theta_p)\}_{{j\in\mathcal{N}_i}},\pi_{i,\bar{t}_1(\omega,\delta)+1}(\theta_p)\}}\\
&\overset{(b)}{\geq}\frac{\eta(\omega)}{\sum\limits_{p=1}^{m}\min\{\{\mu_{j,\bar{t}_1(\omega,\delta)}(\theta_p)\}_{{j\in\mathcal{N}_i}},\pi_{i,\bar{t}_1(\omega,\delta)+1}(\theta_p)\}}\\
&\overset{}{\geq}\frac{\eta(\omega)}{\sum\limits_{p=1}^{m}\pi_{i,\bar{t}_1(\omega,\delta)+1}(\theta_p)}\\
&\overset{(c)}{=}\eta(\omega),
\end{aligned}
\label{eqn:lower1}
\end{equation}
where $(a)$ is given by \eqref{eqn:rule1}, $(b)$ follows from the way $\eta(\omega)$ is defined and by noting that $\pi_{i,t}(\theta^{\star}) \geq \eta(\omega), \forall t\geq \bar{t}_1(\omega,\delta), \forall i\in\mathcal{V}$, and $(c)$ follows by noting that the local belief vectors generated via \eqref{eqn:Bayes} (at each time-step) are valid probability distributions over the hypothesis set $\Theta$, and hence $\sum\limits_{p=1}^{m}\pi_{i,\bar{t}_1(\omega,\delta)+1}(\theta_p)=1$. Since the above reasoning applies to every agent in the network, we can keep repeating it to establish \eqref{eqn:lower_bound} via induction. 

\textbf{\underline{Step 2}:} \textit{Upper bounding the actual beliefs on each false hypothesis}:  The key observation that guides the rest of the proof is as follows. While Step 1 of the proof ensures that the beliefs (both local and actual) of each agent on the true state $\theta^{\star}$ are lower-bounded by $\eta(\omega)>0$ after a finite period of time (given by $\bar{t}_1(\omega,\delta)$), Lemma \ref{lemma:Bayes} guarantees that the local beliefs on any false hypothesis $\theta$ will eventually become arbitrarily small (and in particular, smaller than $\eta(\omega)$) for each agent $i\in\mathcal{S}(\theta^{\star},\theta)$, on the sample path $\omega\in\bar{\Omega}$ under consideration. In what follows, we investigate how this impacts the actual beliefs of the agents in the network. To this end, given an $\epsilon > 0$, pick a small $\bar{\epsilon}(\omega) > 0$ such that  $\bar{\epsilon}(\omega) < \min\{\eta(\omega),\epsilon\}$. Fix a hypothesis $\theta \neq \theta^{\star}$. By virtue of condition (i) of the theorem, we know that $|\mathcal{S}(\theta^{\star},\theta)| > 0$. Let $q=d(\mathcal{G})+2$, where $d(\mathcal{G})$ represents the diameter of the graph $\mathcal{G}$. Then, based on Lemma \ref{lemma:Bayes}, for each $i\in\mathcal{S}(\theta^{\star},\theta)$, there exists $t^{\theta}_i(\omega,\bar{\epsilon}(\omega))$ such that for all $t\geq t^{\theta}_i(\omega,\bar{\epsilon}(\omega))$, $\pi_{i,t}(\theta) \leq \bar{\epsilon}^{q}(\omega)$. Define 
\begin{equation}
\bar{t}^{\theta}_2(\omega,\delta,\bar{\epsilon}(\omega))\triangleq\max\{\bar{t}_1(\omega,\delta), \max_{i\in\mathcal{S}(\theta^{\star},\theta)}\{t^{\theta}_i(\omega,\bar{\epsilon}(\omega))\}\}.
\label{eqn:crit_time}
\end{equation}
Throughout the rest of the proof, we suppress the dependence of $\bar{t}_2$ on $\theta,\omega,\delta$ and $\bar{\epsilon}(\omega)$ to avoid cluttering the exposition.
For any agent $i\in\mathcal{S}(\theta^{\star},\theta)$, we obtain the following chain of inequalities:
\begin{equation}
\begin{aligned}
\mu_{i,\bar{t}_2+1}(\theta)&\overset{(a)}{=}\frac{\min\{\{\mu_{j,\bar{t}_2}(\theta)\}_{{j\in\mathcal{N}_i}},\pi_{i,\bar{t}_2+1}(\theta)\}}{\sum\limits_{p=1}^{m}\min\{\{\mu_{j,\bar{t}_2}(\theta_p)\}_{{j\in\mathcal{N}_i}},\pi_{i,\bar{t}_2+1}(\theta_p)\}}\\
&\overset{(b)}{\leq}\frac{\bar{\epsilon}^{q}(\omega)}{\sum\limits_{p=1}^{m}\min\{\{\mu_{j,\bar{t}_2}(\theta_p)\}_{{j\in\mathcal{N}_i}},\pi_{i,\bar{t}_2+1}(\theta_p)\}}\\
\\
&\overset{}{\leq}\frac{\bar{\epsilon}^{q}(\omega)}{{\min\{\{\mu_{j,\bar{t}_2}(\theta^{\star})\}_{{j\in\mathcal{N}_i}},\pi_{i,\bar{t}_2+1}(\theta^{\star})\}}}\\
&\overset{(c)}{\leq}\frac{\bar{\epsilon}^{q}(\omega)}{\eta(\omega)}\\
&\overset{(d)}{<}\bar{\epsilon}^{(q-1)}(\omega)\leq\bar{\epsilon}(\omega)<\epsilon,
\end{aligned}
\label{eqn:upper_bound}
\end{equation}
where $(a)$ is given by \eqref{eqn:rule1}, $(b)$ follows from the fact that for each $i\in\mathcal{S}(\theta^{\star},\theta)$, we have $\pi_{i,t}(\theta) \leq \bar{\epsilon}^{q}(\omega), \forall t \geq \bar{t}_2$, $(c)$ follows from \eqref{eqn:lower_bound} and \eqref{eqn:crit_time}, and $(d)$ follows from the way $\bar{\epsilon}$ has been chosen. In particular, note that the above chain of reasoning used to arrive at \eqref{eqn:upper_bound} applies to subsequent time-steps as well. We thus conclude:
\begin{equation}
\mu_{i,t}(\theta) < \bar{\epsilon}^{(q-1)}(\omega), \forall t \geq \bar{t}_2+1, \forall i \in \mathcal{S}(\theta^{\star},\theta).
\label{eqn:upp_bound_source}
\end{equation}
We now wish to investigate how the effect of \eqref{eqn:upp_bound_source} propagates through the rest of the network. If $\mathcal{V}\setminus\mathcal{S}(\theta^{\star},\theta)$ is empty, then we have reached the desired conclusion w.r.t. the false hypothesis $\theta$. If not, define
\begin{equation}
\mathcal{L}^{(\theta^{\star},\theta)}_1\triangleq \{i\in\{\mathcal{V}\setminus\mathcal{S}(\theta^{\star},\theta)\} \hspace{1mm} {:} \hspace{1mm} |\mathcal{N}_i\cap\mathcal{S}(\theta^{\star},\theta)| > 0\}
\label{eqn:level1}
\end{equation}
as the set of immediate out-neighbors of the source set $\mathcal{S}(\theta^{\star},\theta)$. By virtue of condition (ii) of the theorem, if $\mathcal{V}\setminus\mathcal{S}(\theta^{\star},\theta)$ is non-empty, then $\mathcal{L}^{(\theta^{\star},\theta)}_1$ as defined above is also non-empty. Consider any agent $i \in \mathcal{L}^{(\theta^{\star},\theta)}_1$. By definition, agent $i$ has a neighbor in $\mathcal{S}(\theta^{\star},\theta)$ satisfying \eqref{eqn:upp_bound_source}. This observation coupled with equations \eqref{eqn:lower_bound}, \eqref{eqn:crit_time} can be used to obtain a similar chain of inequalities as the ones featuring in \eqref{eqn:upper_bound}. Specifically, we obtain:
\begin{equation}
\mu_{i,t}(\theta) < \bar{\epsilon}^{(q-2)}(\omega), \forall t \geq \bar{t}_2+2, \forall i \in \mathcal{L}^{(\theta^{\star},\theta)}_1.
\label{eqn:upp_bound_l1}
\end{equation}
With $\mathcal{L}^{(\theta^{\star},\theta)}_0\triangleq\mathcal{S}(\theta^{\star},\theta)$, the above arguments can be repeated by successively defining the sets $\mathcal{L}^{(\theta^{\star},\theta)}_r, 1\leq r \leq d(\mathcal{G})$ as follows:
\begin{equation}
\mathcal{L}^{(\theta^{\star},\theta)}_r\triangleq \{i\in\mathcal{V}\setminus\{\bigcup_{c=0}^{r-1}\mathcal{L}^{(\theta^{\star},\theta)}_c\} \hspace{1mm} {:} \hspace{1mm}  |\mathcal{N}_i\cap \{\bigcup_{c=0}^{r-1}\mathcal{L}^{(\theta^{\star},\theta)}_c\}| > 0\}.
\label{eqn:levelr}
\end{equation}
Whenever $\mathcal{V}\setminus\{\bigcup_{c=0}^{r-1}\mathcal{L}^{(\theta^{\star},\theta)}_c\}$ is non-empty, condition (ii) of the theorem implies that $\mathcal{L}^{(\theta^{\star},\theta)}_r$ will also be non-empty. 
One can then easily verify via induction on $r$ that:
\begin{equation}
\mu_{i,t}(\theta) < \bar{\epsilon}^{(q-(r+1))}(\omega), \forall t \geq \bar{t}_2+(r+1), \forall i \in \mathcal{L}^{(\theta^{\star},\theta)}_r,
\label{eqn:upp_bound_lr}
\end{equation}
where $1\leq r \leq d(\mathcal{G})$. Noting that $q=d(\mathcal{G})+2$, we obtain the desired result that $\mu_{i,t}(\theta) < \bar{\epsilon}(\omega) < \epsilon$, $\forall t\geq \bar{t}_2+d(\mathcal{G})+1, \forall i \in \mathcal{V}$. An identical argument as the one presented above can be made for each false hypothesis $\theta\neq\theta^{\star}$. This completes the proof.
\end{proof}
\subsection{Proof of Theorem \ref{thm:rule2}}
\label{sec:proofthm2}
\begin{proof}
Consider an $f$-local adversarial set $\mathcal{A}\subset\mathcal{V}$, and let $\mathcal{R}=\mathcal{V}\setminus\mathcal{A}$. We study two separate cases. 

\underline{\textbf{Case 1:}} Consider a regular agent $i\in\mathcal{R}$ such that $|\mathcal{N}_i| < (2f+1)$. Based on  condition (i) of the theorem, we claim that $i\in\mathcal{S}(\theta_p,\theta_q)$, for every pair $\theta_p,\theta_q \in \Theta$. We prove this claim via contradiction. To do so, suppose there exists a pair $\theta_p,\theta_q\in\Theta$, such that $i\in\mathcal{V}\setminus\mathcal{S}(\theta_p,\theta_q)$. As $|\mathcal{N}_i| < (2f+1)$, the set $\{i\}$ is clearly not $(2f+1)$-reachable (see Def. \ref{defn:rreachable}). Thus, $\mathcal{G}$ is not strongly $(2f+1)$-robust w.r.t. the source set $\mathcal{S}(\theta_p,\theta_q)$, a fact that contradicts condition (i) of the theorem. Thus, we have established that for networks satisfying condition (i) of the theorem, regular agents with fewer than $(2f+1)$ neighbors can distinguish between every pair of hypotheses. Lemma \ref{lemma:Bayes} then implies that such agents can discern the true state $\theta^{\star}$ by simply running the local Bayesian estimator \eqref{eqn:Bayes}, and updating actual beliefs via \eqref{eqn:rule3}.

\underline{\textbf{Case 2:}} We now focus only on regular agents $i$ satisfying $|\mathcal{N}_i| \geq (2f+1)$. For this case, the structure of the proof mirrors that of Theorem \ref{thm:rule1}; we thus only elaborate on details that are specific to tackling the aspect of adversarial agents. A key property of the proposed LFRHE algorithm that will be used throughout the proof is as follows. For any $i\in\mathcal{R}$, and any $\theta\in\Theta$, the filtering operation of the LFRHE algorithm ensures that at each time-step  $t\in\mathbb{N}$, we have:
\begin{equation}
\mu_{j,t}(\theta) \in Conv(\Psi^{\theta}_{i,t}), \forall j \in \mathcal{M}^{\theta}_{i,t},
\label{eqn:LHRHE_property}
\end{equation}
where 
\begin{equation}
\Psi^{\theta}_{i,t} \triangleq \{\mu_{j,t}(\theta) \hspace{1mm} {:} \hspace{1mm}  j\in\mathcal{N}_i\cap\mathcal{R}\},
\label{eqn:set}
\end{equation}
and $Conv(\Psi^{\theta}_{i,t})$ is used to denote the convex hull formed by the points in the set $\Psi^{\theta}_{i,t}$. In other words, any neighboring belief (on a particular hypothesis) that agent $i$ uses in the update rule \eqref{eqn:rule2} lies in the convex hull of the actual beliefs of its regular neighbors (on that particular hypothesis). To see why \eqref{eqn:LHRHE_property} is true, partition the neighbor set $\mathcal{N}_i$ of a regular agent into three sets $\mathcal{U}^{\theta}_{i,t}, \mathcal{M}^{\theta}_{i,t}$, and $\mathcal{J}^{\theta}_{i,t}$ as follows. Sets $\mathcal{U}^{\theta}_{i,t}$ and $\mathcal{J}^{\theta}_{i,t}$ are each of cardinality $f$, and contain neighbors of agent $i$ that transmit the highest $f$ and the lowest $f$ actual beliefs respectively, on the hypothesis $\theta$, to agent $i$ at time-step $t$. The set $\mathcal{M}^{\theta}_{i,t}$ contains the remaining neighbors of agent $i$, and is non-empty at every time-step since $|\mathcal{N}_i| \geq (2f+1)$. If $\mathcal{M}^{\theta}_{i,t}\cap\mathcal{A}=\emptyset$, then \eqref{eqn:LHRHE_property} holds trivially. Thus, consider the case when there are adversaries in the set $\mathcal{M}^{\theta}_{i,t}$, i.e., $\mathcal{M}^{\theta}_{i,t}\cap\mathcal{A} \neq \emptyset$. Given the $f$-locality of the adversarial model, and the nature of the filtering operation in the LFRHE algorithm, we infer that for each $j\in\mathcal{M}^{\theta}_{i,t}\cap\mathcal{A}$, there exist regular agents $u,v\in\mathcal{N}_i\cap\mathcal{R}$, such that $u\in\mathcal{U}^{\theta}_{i,t}$, $v\in\mathcal{J}^{\theta}_{i,t}$, and $\mu_{v,t}(\theta) \leq \mu_{j,t}(\theta) \leq \mu_{u,t}(\theta)$. This establishes our claim regarding equation \eqref{eqn:LHRHE_property}.

With the above property in hand, our goal will be to now establish each of the two steps in the proof of Theorem \ref{thm:rule1}. To this end, let $\bar{\Omega}\subseteq\Omega$ denote the set of sample paths along which for each agent $i\in\mathcal{R}$, the following hold: (i) for each $\theta\in\Theta\setminus{\Theta^{\theta^{\star}}_i}$,  $\pi_{i,t}(\theta) \rightarrow 0$, and (ii) $\pi_{i,\infty}(\theta^{\star})\triangleq\lim_{t\to\infty}\pi_{i,t}(\theta^{\star})$ exists, and satisfies $\pi_{i,\infty}(\theta^{\star})\geq \pi_{i,0}(\theta^{\star})$. Based on condition (ii)  of the theorem, and Lemma \ref{lemma:Bayes}, we infer that $\bar{\Omega}$ has measure $1$. Thus, as in Theorem \ref{thm:rule1}, fix a sample path $\omega\in\bar{\Omega}$, and pick $\epsilon > 0$. Define $\gamma_1=\min_{i\in\mathcal{R}} \pi_{i,0}(\theta^{\star})$, pick a small number $\delta > 0$ satisfying $\delta < \gamma_1$, and observe that for each agent $i\in\mathcal{R}$, there exists $t_i(\omega,\delta)$, such that for all $t \geq t_i(\omega,\delta)$, $\pi_{i,t}(\theta^{\star}) \geq  \gamma_1-\delta > 0$. Define $\bar{t}_1(\omega,\delta)\triangleq\max_{i\in\mathcal{R}}t_i(\omega,\delta)$ and $\gamma_2(\omega)\triangleq\min_{i\in\mathcal{R}}\{\mu_{i,\bar{t}_1(\omega,\delta)}(\theta^{\star})\}$. As before, we claim $\gamma_2(\omega) > 0$. To establish this claim, we need to answer the following question: Can an adversarial agent cause its out-neighbors to set their actual beliefs on $\theta^{\star}$ to be $0$ by setting its own actual belief on $\theta^{\star}$ to be $0$? We argue that this is impossible under the LFRHE algorithm. By way of contradiction, suppose there exists a time-step ${t}'(\omega)$ satisfying:
\begin{equation}
{t}'(\omega)=\min\{t\in\mathbb{N}\hspace{1mm}{:} \hspace{1mm} \exists i \in \mathcal{R} \hspace{1mm}\textrm{with} \hspace{1mm}\mu_{i,t}(\theta^{\star})=0\}.\
\end{equation}
In words, $t'(\omega)$ represents the first time-step when some regular agent $i$ sets its actual belief on the true hypothesis to be zero. Clearly, $t'(\omega)\neq 0$ based on condition (ii) of the theorem. Suppose  ${t}'(\omega)$ is some positive integer, and focus on how agent $i$ updates $\mu_{i,{t}'(\omega)}(\theta^{\star})$ based on \eqref{eqn:rule2}. Following similar arguments as in the proof of Theorem \ref{thm:rule1}, we know that $\pi_{i,t}(\theta^{\star}) > 0, \forall t\in \mathbb{N}, \forall i \in \mathcal{R}.$ At the same time, every belief featuring in the set $\Psi^{\theta^{\star}}_{i,{t}'(\omega)-1}$ (as defined in equation \eqref{eqn:set}) is strictly positive based on the way ${t}'(\omega)$ is defined. In light of the above arguments, and based on \eqref{eqn:LHRHE_property}, \eqref{eqn:set}, we infer:
\begin{equation}
\min\{\{\mu_{j,{t}'(\omega)-1}(\theta^{\star})\}_{j\in\mathcal{M}^{\theta^{\star}}_{i,{t}'(\omega)-1}},\pi_{i,{t}'(\omega)}(\theta^{\star})\} > 0.
\end{equation}
Thus, based on \eqref{eqn:rule2}, we must have $\mu_{i,{t}'(\omega)}(\theta^{\star}) > 0$, yielding the desired contradiction. With $\eta(\omega)\triangleq\min\{\gamma_1-\delta,\gamma_2(\omega)\} > 0$, one can easily  verify the following:
\begin{equation}
\mu_{i,t}(\theta^{\star}) \geq \eta(\omega), \forall t \geq \bar{t}_1(\omega,\delta), \forall i\in\mathcal{R}.
\label{eqn:lower2}
\end{equation}
In particular, \eqref{eqn:lower2} follows by (i) noting that for each $i \in \mathcal{R}$, $\pi_{i,\bar{t}_1(\omega,\delta)+1}(\theta^{\star}) \geq \eta(\omega)$, and each belief featuring in the set $\Psi^{\theta^{\star}}_{i,\bar{t}_1(\omega,\delta)}$ is lower bounded by $\eta(\omega)$, (ii) leveraging \eqref{eqn:LHRHE_property}, \eqref{eqn:set}, and (iii) using a similar string of arguments as those used to arrive at \eqref{eqn:lower1}. This completes Step 1.

To proceed with Step 2 (i.e., upper-bounding the actual beliefs on each false hypothesis), given an $\epsilon > 0$, pick a small $\bar{\epsilon}(\omega) > 0$ such that  $\bar{\epsilon}(\omega) < \min\{\eta(\omega),\epsilon\}$. Fix a hypothesis $\theta \neq \theta^{\star}$, let $q=n+1$, and note that based on Lemma \ref{lemma:Bayes}, for each $i\in\mathcal{S}(\theta^{\star},\theta)\cap\mathcal{R}$, there exists $t^{\theta}_i(\omega,\bar{\epsilon}(\omega))$ such that for all $t\geq t^{\theta}_i(\omega,\bar{\epsilon}(\omega))$, $\pi_{i,t}(\theta) \leq \bar{\epsilon}^{q}(\omega)$. Define 
\begin{equation}
\bar{t}_2\triangleq\max\{\bar{t}_1(\omega,\delta), \max_{i\in\mathcal{S}(\theta^{\star},\theta)\cap\mathcal{R}}\{t^{\theta}_i(\omega,\bar{\epsilon}(\omega))\}\},
\label{eqn:crit_time2}
\end{equation}
where we have suppressed the dependence of $\bar{t}_2$ on $\theta,\omega,\delta$ and $\bar{\epsilon}(\omega)$ as in the proof of Theorem \ref{thm:rule1}. For any agent $i\in\mathcal{S}(\theta^{\star},\theta)\cap\mathcal{R}$, observe that
\begin{equation}
\min\{\{\mu_{j,\bar{t}_2}(\theta^{\star})\}_{j\in\mathcal{M}^{\theta^{\star}}_{i,\bar{t}_2}},\pi_{i,\bar{t}_2+1}(\theta^{\star})\} \geq \eta(\omega).
\end{equation}
Combining the above with a similar line of argument as used to arrive at \eqref{eqn:upper_bound}, we obtain:
\begin{equation}
\mu_{i,t}(\theta) < \bar{\epsilon}^{(q-1)}(\omega), \forall t \geq \bar{t}_2+1, \forall i \in \mathcal{S}(\theta^{\star},\theta)\cap\mathcal{R}.
\label{eqn:upp_bound_source2}
\end{equation}
If $\mathcal{V}\setminus\mathcal{S}(\theta^{\star},\theta)$ is empty, then we are done. Else, define 
\begin{equation}
\mathcal{L}^{(\theta^{\star},\theta)}_1\triangleq \{i\in\{\mathcal{V}\setminus\mathcal{S}(\theta^{\star},\theta)\} \hspace{1mm} {:} \hspace{1mm} |\mathcal{N}_i\cap\mathcal{S}(\theta^{\star},\theta)| \geq (2f+1)\}.
\label{eqn:level1thm2}
\end{equation}
Whenever $\mathcal{V}\setminus\mathcal{S}(\theta^{\star},\theta)$ is non-empty, we claim that $\mathcal{L}^{(\theta^{\star},\theta)}_1$ (as defined above) is also non-empty based on condition (i) of the theorem. To see this, note that if $\mathcal{L}^{(\theta^{\star},\theta)}_1$ is empty, then $\mathcal{C}=\mathcal{V}\setminus\mathcal{S}(\theta^{\star},\theta)$ is not $(2f+1)$-reachable, violating the fact that $\mathcal{G}$ is strongly $(2f+1)$-robust w.r.t. $\mathcal{S}(\theta^{\star},\theta)$. We claim
\begin{equation}
\min_{j\in\mathcal{M}^{\theta}_{i,\bar{t}_2+1}}{\mu_{j,\bar{t}_2+1}(\theta)} < \bar{\epsilon}^{(q-1)}(\omega), \forall i\in\mathcal{L}^{(\theta^{\star},\theta)}_1\cap\mathcal{R}.
\label{eqn:bound1}
\end{equation}
To verify the above claim, pick any agent $i\in\mathcal{L}^{(\theta^{\star},\theta)}_1\cap\mathcal{R}$. When $|\mathcal{M}^{\theta}_{i,\bar{t}_2+1}\cap\{\mathcal{S}(\theta^{\star},\theta)\cap\mathcal{R}\}|>0$, the claim follows immediately based on \eqref{eqn:upp_bound_source2}. Consider the case when $|\mathcal{M}^{\theta}_{i,\bar{t}_2+1}\cap\{\mathcal{S}(\theta^{\star},\theta)\cap\mathcal{R}\}|=0$. Since $i\in\mathcal{L}^{(\theta^{\star},\theta)}_1$, it has at least $(2f+1)$ neighbors in $\mathcal{S}(\theta^{\star},\theta)$, out of which at least $f+1$ are regular based on the $f$-locality of the adversarial model. Since the set $\mathcal{J}^{\theta}_{i,\bar{t}_2+1}$ has cardinality $f$, it must then be that $|\mathcal{U}^{\theta}_{i,\bar{t}_2+1}\cap\{\mathcal{S}(\theta^{\star},\theta)\cap\mathcal{R}\}| > 0$. Let $u\in\mathcal{U}^{\theta}_{i,\bar{t}_2+1}\cap\{\mathcal{S}(\theta^{\star},\theta)\cap\mathcal{R}\}$. Based on the way $\mathcal{M}^{\theta}_{i,\bar{t}_2+1}$ is defined, it must  be that $\mu_{j,\bar{t}_2+1}(\theta) \leq \mu_{u,\bar{t}_2+1}(\theta) < \bar{\epsilon}^{(q-1)}(\omega), \forall j \in \mathcal{M}^{\theta}_{i,\bar{t}_2+1}$, where the last inequality follows from \eqref{eqn:upp_bound_source2}. This establishes our claim regarding \eqref{eqn:bound1}. Consider the update of $\mu_{i,\bar{t}_2+2}(\theta)$ based on \eqref{eqn:rule2}. In light of the above arguments (that apply identically to subsequent time-steps as well), the numerator of the fraction on the RHS of \eqref{eqn:rule2} is upper-bounded by $\bar{\epsilon}^{(q-1)}(\omega)$, while the denominator is lower-bounded by $\eta(\omega)$. This leads to the following conclusion:
\begin{equation}
\mu_{i,t}(\theta) < \bar{\epsilon}^{(q-2)}(\omega), \forall t \geq \bar{t}_2+2, \forall i \in \mathcal{L}^{(\theta^{\star},\theta)}_1\cap\mathcal{R}.
\label{eqn:upp_bound_l1thm2}
\end{equation}
With $\mathcal{L}^{(\theta^{\star},\theta)}_0\triangleq\mathcal{S}(\theta^{\star},\theta)$, we recursively define the sets $\mathcal{L}^{(\theta^{\star},\theta)}_r, 1\leq r \leq (n-1)$ as follows:
\begin{equation}
\mathcal{L}^{(\theta^{\star},\theta)}_r\triangleq \{i\in\mathcal{V}\setminus\{\bigcup_{c=0}^{r-1}\mathcal{L}^{(\theta^{\star},\theta)}_c\} \hspace{1mm} {:} \hspace{1mm}  |\mathcal{N}_i\cap \{\bigcup_{c=0}^{r-1}\mathcal{L}^{(\theta^{\star},\theta)}_c\}| \geq (2f+1)\}.
\end{equation}
We complete the proof by inducting on $r$. To this end, suppose the following holds for all $0\leq r \leq (n-2)$:
\begin{equation}
\mu_{i,t}(\theta) < \bar{\epsilon}^{(q-(r+1))}(\omega), \forall t \geq \bar{t}_2+(r+1), \forall i \in \mathcal{L}^{(\theta^{\star},\theta)}_r\cap\mathcal{R}.
\label{eqn:upp_bound_lrthm2}
\end{equation}
The claim extends to the case when $r=(n-1)$ by noting that (i)  $\mathcal{L}^{(\theta^{\star},\theta)}_{(n-1)}$ is non-empty if $\mathcal{V}\setminus\{\bigcup_{c=0}^{(n-2)}\mathcal{L}^{(\theta^{\star},\theta)}_c\}$ is non-empty (based on condition (i) of the theorem), (ii) any agent $i\in\mathcal{L}^{(\theta^{\star},\theta)}_{(n-1)}\cap\mathcal{R}$ has at least $(2f+1)$ neighbors in the set $\bigcup_{c=0}^{(n-2)}\mathcal{L}^{(\theta^{\star},\theta)}_c$, of which at least $f+1$ are regular (based on the $f$-locality of the adversarial model), and (iii) using the induction hypothesis and arguments similar to those used for arriving at \eqref{eqn:upp_bound_l1thm2}. Finally, note that the sets $\mathcal{L}^{(\theta^{\star},\theta)}_r$  are constructed in a way such that all agents in $\mathcal{R}$ are covered. The rest of the proof is identical to that of Theorem \ref{thm:rule1}.
\end{proof}
\section{Conclusion}
In this paper, we introduced a distributed learning rule that differs fundamentally from those existing in the literature, in the sense, that it does not rely on any consensus-based belief aggregation protocol. Using a novel sample path based analysis technique, we established its consistency under minimal requirements on the information structures of the agents and the communication graph. We then showed that a significant benefit of the proposed learning rule is that it can be easily and efficiently modified to account for the presence of misbehaving agents in the network, modeled via the Byzantine adversary model.  
Ongoing work involves performing a detailed convergence rate analysis to see how such rates compare with those existing in literature. Extensions to time-varying graphs are also of interest.
\bibliographystyle{unsrt}
\bibliography{refs}
\end{document}